\newtheorem{dfn}{Definition}
\newtheorem{prb}{Problem}
\newtheorem{prop}{Proposition}
\newtheorem{my_example}{Example}
\begin{document}

\mainmatter

\title{Improved Algorithms for the Point-Set Embeddability problem for Plane $3$-Trees}

\author{Tanaeem M Moosa \and M. Sohel Rahman }

\institute{Department of CSE, BUET, Dhaka-1000, Bangladesh\\
\texttt{http://www.buet.ac.bd/cse}
\email{ \and \{tanaeem,msrahman\}@cse.buet.ac.bd }
}

\authorrunning{Moosa and Rahman}

\maketitle

\setcounter{footnote}{0}

\begin{abstract}
In the point set embeddability problem, we are given a plane graph $G$ with $n$ vertices and a point set $S$ with $n$ points. Now the goal is to answer the question whether there exists a straight-line drawing of $G$ such that each vertex is represented as a distinct point of $S$ as well as to provide an embedding if one does exist. Recently, in~\cite{DBLP:conf/gd/NishatMR10}, a complete characterization for this problem on a special class of graphs known as the plane 3-trees was presented along with an efficient algorithm to solve the problem. In this paper, we use the same characterization to devise an improved algorithm for the same problem. Much of the efficiency we achieve comes from clever uses of the triangular range search technique. We also study a generalized version of the problem and present improved algorithms for this version of the problem as well.
\end{abstract}

\section{Introduction}
A \emph{planar graph} is a graph that can be \emph{embedded} in the plane, i.e., it can be drawn on the plane in such a way that its edges intersect only at their endpoints. A planar graph already drawn in the plane without edge intersections is called a \emph{plane graph} or \emph{planar embedding} of the graph. A straight-line drawing $\Gamma$ of a plane graph $G$ is a graph embedding in which each vertex is drawn as a point and each edge is drawn as straight line segments (as opposed to curves, etc.).

Given a plane graph $G$ with $n$ vertices and a set $S$ of $n$ points in the plane, a point-set embedding of $G$ on $S$ is a straight-line drawing of $G$ such that
each vertex is represented as a distinct point of $S$. The problem of computing a point-set embedding of a graph, also referred to as the point-set embeddability problem in the literature, has been extensively studied both when
the mapping of the vertices to the points is chosen by the drawing algorithm and
when it is partially or completely given as part of the input. There exists a number of results of the point-set embeddability problem on different graph classes in the literature~\cite{DBLP:journals/comgeo/Bose02,DBLP:journals/dcg/IkebePTT94,DBLP:journals/dam/KanekoK00,DBLP:journals/gc/PachW01}. A number of variants of the original problem have also been studied in the literature. For example in~\cite{DBLP:conf/wads/BadentGL07,DBLP:journals/jgaa/GiacomoDLMTW08}, a variant of the point-set embeddability problem has been studied, where the vertex set of the given graph and the given set of points are divided into a number of partitions and a particular vertex subset is to be mapped to a particular point subset. Other variants have also been studied with great interest~\cite{DBLP:journals/ijcga/KanekoK05,DBLP:journals/ijfcs/GiacomoLT06}.

Very recently, Nishat et al.~\cite{DBLP:conf/gd/NishatMR10} studied the point set embeddability problem on a class of graphs known as the \emph{plane 3-tree}.
Plane 3-trees belong to an interesting class of graphs and recently a number of different drawing algorithms have been presented in the literature on plane 3-trees~\cite{BiedlV10,MondalNRA10,DBLP:conf/gd/NishatMR10}. In this paper, we follow up the work of \cite{DBLP:conf/gd/NishatMR10} and improve upon their result from an algorithmic point of view. In~\cite{DBLP:conf/gd/NishatMR10}, Nishat et al. presented an $O(n^2 \log n)$ time algorithm that can decide whether a plane 3-tree $G$ of $n$ vertices admits a point-set embedding on a given set of $n$ points or not and compute a point-set embedding of $G$ if such an embedding exists. In this paper, we show how to improve the running time of the above algorithm. In particular, we take their algorithmic ideas as the building block of our algorithm and with some non trivial modifications we achieve a running time of $O(n^{4/3+\epsilon}\log n)$.
The efficiency of our algorithm comes mainly from clever uses of triangular range search and counting queries \cite{Paterson1986441,ALGOR::ChazelleSW1992,journals/dcg/Chazelle97} and bounding the number of such queries. Furthermore, we study a generalized version of the Point-Set Embeddability problem where the point set $S$ has more points than the number of vertices of the input plane 3-tree, i.e., $|S| = k >n$. For this version of the problem, an $O(nk^8)$ time algorithm was presented in~\cite{DBLP:conf/gd/NishatMR10}. We present an improved algorithm running in $O(nk^4)$ time.

The rest of this paper is organized as follows. Section~\ref{sec:prel} presents some definitions
and preliminary results. Section~\ref{sec:quad} presents a brief review of the algorithm presented in~\cite{DBLP:conf/gd/NishatMR10}. In Section~\ref{sec:subq} we present our main result. Section~\ref{sec:genProb} briefly discusses about the generalized version of the problem and we briefly conclude in Section~\ref{sec:con}.

%


\section{Preliminaries}\label{sec:prel}
In this section we present some preliminary notations, definitions and results that we use in our paper. We follow mainly the definitions and notations of~\cite{nishizeki2004planar}. We start with a formal definition of the \emph{straight-line drawing}.
\begin{dfn} [Straight-Line Drawing] \label{prb:strln}
Given a plane graph $G$, a straight-line drawing $\Gamma(G)$ of $G$ is a drawing of $G$ where vertices are drawn as points and edges are drawn as connecting lines.
\end{dfn}
The problem we handle in this paper is formally defined as follows.
\begin{prb} [Point-Set Embeddability] \label{prb:ptemb}
 Let $G$ be a plane graph of $n$ vertices and $S$ be a set of $n$ points on plane. The point-set embeddability problem wants to find a straight-line drawing of $G$ such that the vertices of $G$ are mapped to the points of $S$.
\end{prb}
Finding a point-set embedding for an arbitrary plane graph is proved to be NP-Complete~\cite{journals/jgaa/Cabello06}, even for some restricted subclasses. On the other hand, polynomial time algorithm exists for finding point-set embedding for outerplanar graphs or trees~\cite{journals/dcg/IkebePTT94,DBLP:journals/comgeo/Bose02}. An interesting research direction in the literature is to investigate this problem on various other restricted graph classes. One such interesting graph class, known as the plane 3-tree, is formally defined below.

\begin{definition}[Plane $3$-Tree] \label{def:plane3t}
A plane 3-tree is a triangulated plane graph $G=(V,E)$ with $n$ vertices such that either $n=3$, or there exists a vertex $x$ such that the graph induced by $V-\{x\}$ is also a plane $3$-tree.
\end{definition}

\begin{figure}[t!]
\begin{center}
\leavevmode
\scalebox{0.7}{
\includegraphics{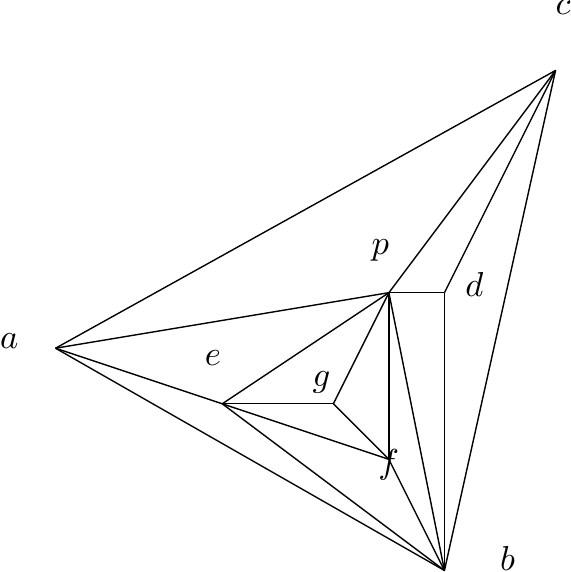}
}
\end{center}
\caption{A plane 3-tree of 17 vertices}
\label{fig:plane3tree}
\end{figure}

Figure~\ref{fig:plane3tree} presents a plane 3-tree with 17 vertices. As has been mentioned above, the very recent work of Nishat et al.~\cite{DBLP:conf/gd/NishatMR10} proved that finding point-set embedding is polynomially solvable if the input is restricted to a Plane $3$-Tree. Since a plane $3$-tree is triangulated, its outer face has only $3$ vertices, known as the \emph{outer vertices}. The following two interesting properties of a plane $3$-tree with $n>3$ will be required later.
\begin{prop} [\cite{conf/gd/BiedlV09}]
Let $G$ be a plane 3-tree with $n>3$ vertices. Then, there is a node $x$ with degree $3$ whose deletion will give a plane $3$-tree of $n-1$ vertices.
\end{prop}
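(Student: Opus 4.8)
The plan is to lean on the recursive definition of a plane 3-tree, which for $n > 3$ already provides a vertex $x$ with $G[V \setminus \{x\}]$ a plane 3-tree on $n-1$ vertices; the only task left is to show that such an $x$ necessarily has degree $3$ in $G$. I would obtain this from a one-line edge count rather than from any structural case analysis.

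First I would record the standard fact that a plane 3-tree is a triangulated (maximal planar) graph, so a plane 3-tree on $m \ge 3$ vertices has exactly $3m - 6$ edges --- a direct consequence of Euler's formula together with the fact that every face, the outer one included, is a triangle. Applying this to $G$ gives $|E(G)| = 3n - 6$, and applying it to $H := G[V \setminus \{x\}]$, which is a plane 3-tree on $n - 1 \ge 3$ vertices, gives $|E(H)| = 3(n-1) - 6 = 3n - 9$. Since $H$ is the \emph{induced} subgraph on $V \setminus \{x\}$, its edge set is exactly $E(G)$ with the edges incident to $x$ removed; hence $\deg_G(x) = |E(G)| - |E(H)| = 3$. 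This is precisely the claim: $x$ has degree $3$, and deleting it yields the plane 3-tree $H$ on $n-1$ vertices.

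I do not anticipate a genuine obstacle. The only points needing a moment's care are the base case $n = 4$ (where $H$ is a single triangle), which the formula $3m-6$ still handles correctly, and the observation that the definition only promises the existence of one such $x$ --- not uniqueness, and not that it be an interior vertex --- but the edge count above is insensitive to both. As an independent sanity check one could argue combinatorially instead: if $x$ is interior, its link in the triangulation $G$ is a cycle $C$, and deleting $x$ replaces the fan of triangles around $x$ by a single face bounded by $C$, so triangulation of $H$ forces $|C| = 3$; and if $x$ lies on the outer face, the same reasoning applied there forces $x$ to have exactly one interior neighbour, again giving $\deg_G(x) = 3$. Since the edge-counting argument sidesteps this case split, that is the version I would present.
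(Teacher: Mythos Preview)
Your argument is correct: the recursive definition hands you the vertex $x$, both $G$ and the induced subgraph $G[V\setminus\{x\}]$ are triangulated plane graphs, and the edge-count $3n-6$ versus $3(n-1)-6$ forces $\deg_G(x)=3$. The base case $n=4$ is handled, and your remark that uniqueness or interiority of $x$ is irrelevant is on point.

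As for comparison: the paper does not prove this proposition at all. It is quoted verbatim as a result of Biedl and Vel\'azquez (the citation \texttt{conf/gd/BiedlV09}) and used without argument. So your proposal supplies a self-contained proof where the paper simply defers to the literature; there is nothing in the paper to compare against beyond noting that your edge-counting approach is short and requires nothing more than Euler's formula.
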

\begin{prop} [\cite{conf/gd/BiedlV09}] \label{prop-P}
Let $G$ be a plane 3-tree with $n>3$ vertices. Then, there exists exactly $1$ vertex (say, $p$) that is a common neighbor of all $3$ outer vertices.
\end{prop}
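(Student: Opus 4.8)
The plan is to split the statement into two parts: existence of a common neighbour $p$ of all three outer vertices, which I would prove by induction on $n$, and uniqueness, which follows from a direct planarity argument. Throughout, let $a,b,c$ denote the three outer vertices of $G$ (these exist because $G$ is triangulated, so its outer face is a triangle).

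\emph{Existence.} I would induct on $n$. The base case $n=4$ is immediate: a triangulated plane graph on four vertices is $K_4$, drawn with one vertex $p$ inside the triangle $abc$, and that inner vertex is adjacent to all of $a,b,c$. For $n\ge 5$, use the preceding Proposition to pick a degree-$3$ vertex $x$ whose deletion yields a plane $3$-tree $G'$ on $n-1\ge 4$ vertices, and distinguish two cases. If $x$ is an interior vertex of $G$, then deleting $x$ only merges the three faces around $x$ and leaves the outer face untouched, so $G'$ has the same outer triangle $abc$; by the induction hypothesis $G'$ contains a vertex adjacent to all of $a,b,c$, and since $G'\subseteq G$ this adjacency persists in $G$. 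If instead $x$ is an outer vertex, say $x=c$, let $d$ be its third neighbour besides $a$ and $b$. Inspecting the three faces incident to the degree-$3$ vertex $c$ — the outer face $abc$ and the two faces lying on the two sides of edge $cd$ — and using that $G$ is triangulated shows that those two faces must be the triangles $acd$ and $bcd$, whence $ad,bd\in E(G)$; thus $d$ is a common neighbour of $a,b,c$ and the claim holds (no appeal to the induction hypothesis is even needed in this case).

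\emph{Uniqueness.} Suppose $p\neq q$ are both adjacent to $a$, $b$ and $c$. Since a vertex adjacent to $a$, $b$, $c$ cannot be one of them, both $p$ and $q$ are interior vertices, hence lie strictly inside the triangle $abc$. The edges $pa$, $pb$, $pc$ partition the open triangle $abc$ into the interiors of the triangles $pab$, $pbc$, $pca$; since $q$ is a vertex distinct from $p,a,b,c$, it lies on none of the edges $pa,pb,pc$, and therefore strictly inside exactly one of these sub-triangles, say $\overline{\triangle pab}$. But $c\notin\overline{\triangle pab}$ (as $p$ is strictly interior to $abc$), so the straight segment representing the edge $qc$ must meet the boundary of $\triangle pab$, which consists of the edges $pa$, $pb$, $ab$ of $G$ together with the vertices $p,a,b$. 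Either this yields a transversal crossing of two edges of $G$, or it forces one of $p,a,b$ to lie in the interior of the edge $qc$; both contradict the fact that $G$ is a plane graph. Hence $p=q$.

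\emph{Main obstacle.} The delicate point is the existence part in the case where the deleted degree-$3$ vertex happens to be an \emph{outer} vertex: the induction hypothesis cannot be invoked directly, since removing an outer vertex changes the outer triangle of the smaller plane $3$-tree. The resolution is the local analysis of the faces around that degree-$3$ vertex, which exposes its third neighbour $d$ as a common neighbour of all three original outer vertices. A secondary, more routine care is required in the uniqueness argument to exclude the degenerate ways in which the segment $qc$ could touch $\partial(\triangle pab)$ — namely through a shared endpoint rather than a genuine crossing.
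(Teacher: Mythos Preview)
The paper does not prove this proposition; it is quoted from \cite{conf/gd/BiedlV09} without argument, so there is no in-paper proof to compare against. Your proposal is a correct self-contained proof. The existence argument is clean, and your care in treating separately the case where the degree-$3$ vertex supplied by Proposition~1 happens to be an outer vertex is warranted, since Proposition~1 as stated here does not guarantee that $x$ is interior (even though in the standard recursive construction of plane $3$-trees it always is). One small remark on the uniqueness half: your phrasing is geometric (``triangle'', ``straight segment''), whereas a plane graph is a topological rather than rectilinear object; the argument goes through verbatim if you replace the sub-triangles by the Jordan regions bounded by the arcs $pa,pb,pc$ and the outer cycle and invoke the Jordan curve theorem, or alternatively first pass to a straight-line drawing via F\'ary's theorem.
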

%
For a plane 3-tree $G$, the vertex $p$ (as defined in Proposition~\ref{prop-P}) is referred to as the \textit{representative vertex} of $G$. For a plane graph $G$, and a cycle $C$ in it, we use $G(C)$ to denote the subgraph of $G$ inside $C$ (including $C$). In what follows, if a cycle $C$ is a triangle involving the vertices $x,y$ and $z$, we will often use $\triangle xyz$ and $G(\triangle xyz)$ to denote $C$ and $G(C)$. The following interesting lemma was recently proved in \cite{DBLP:conf/gd/NishatMR10} and will be useful later in this paper.
\begin{lemma} [\cite{DBLP:conf/gd/NishatMR10}]
Let $G$ be a plane $3$-tree of $n>3$ vertices and $C$ be any triangle of $G$. Then, the subgraph $G(C)$ is a plane $3$-tree.
\end{lemma}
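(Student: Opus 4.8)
The plan is to prove this by induction on $n$, using the degree-$3$ vertex supplied by the first of the two Propositions above rather than the representative vertex $p$ of Proposition~\ref{prop-P}. The base case $n=3$ is immediate: the only triangle of $G$ is then $G$ itself, and a triangle is a plane $3$-tree. So I would assume $n>3$ and fix a vertex $t$ of degree $3$ with $H:=G-t$ a plane $3$-tree on $n-1\ge 3$ vertices; let $x,y,z$ be the three neighbours of $t$. It is worth recording once and for all that, for \emph{every} triangle $C$ of $G$, the graph $G(C)$ is automatically a triangulated plane graph: its outer face is $C$ and every inner face of $G(C)$ is an inner face of $G$, hence a triangle. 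By Definition~\ref{def:plane3t} it therefore suffices, for a given triangle $C$, either to check that $G(C)$ has exactly three vertices or to exhibit one vertex whose deletion leaves a plane $3$-tree. Now I fix an arbitrary triangle $C$ of $G$ and split on whether $t\in V(C)$.

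Suppose first that $t\notin V(C)$, so that $C$ is also a triangle of $H$. Since $t$ is not a vertex of $C$ and no vertex of a plane graph lies in the interior of an edge, $t$ lies strictly inside or strictly outside the Jordan curve $C$; moreover no edge of $G$ can cross $C$, and which side of $C$ is ``inside $C$'' is a property of the curve alone, the same in $G$ as in $H$. If $t$ lies outside $C$, then deleting $t$ changes nothing inside $C$, so $G(C)=H(C)$, which is a plane $3$-tree by the induction hypothesis applied to $H$. If $t$ lies inside $C$, then the edges $tx,ty,tz$ cannot cross $C$ either, so $x,y,z\in V(G(C))$, the vertex $t$ has degree $3$ in $G(C)$, and $G(C)-t=H(C)$ is a plane $3$-tree by induction; hence so is $G(C)$.

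Suppose now that $t\in V(C)$. Because $t$ has only the three neighbours $x,y,z$, the other two vertices of $C$ are two of them, say $C=\triangle txy$. I would then look at the face of $G$ incident with $t$ in the corner between the edges $tx$ and $ty$ on the side not containing $tz$: since $G$ is triangulated this face is a triangle, and it has $tx$ and $ty$ among its edges, so it is $\triangle txy$ itself. Thus $C$ bounds an empty face of $G$. If that face is the outer face of $G$, then $G(C)=G$; otherwise ``inside $C$'' is precisely that empty face and $G(C)=C$. Both are plane $3$-trees, which closes the induction.

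The only genuinely essential use of the hypothesis is in the case $t\in V(C)$: it is exactly the fact that $t$ has degree $3$ that forces a triangle through $t$ to bound a face of $G$. The hard part of a clean write-up will be the topological bookkeeping in the other case — arguing that ``inside $C$'' is intrinsic to the curve $C$, that deleting $t$ from outside $C$ leaves the interior untouched, and that deleting the degree-$3$ vertex $t$ from inside $C$ coincides with deleting $t$ from $G(C)$ — all of which is elementary but easy to get wrong if stated loosely.
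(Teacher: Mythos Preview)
The paper does not supply its own proof of this lemma; it is quoted from \cite{DBLP:conf/gd/NishatMR10} and used as a black box, so there is nothing in the present paper to compare against.

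Your inductive argument via the degree-$3$ vertex $t$ of Proposition~1 is correct. The case split on $t\in V(C)$ versus $t\notin V(C)$ is the natural one, and both branches go through: when $t\notin V(C)$ you correctly reduce to $H(C)$ (directly if $t$ is outside $C$, or after peeling off $t$ if $t$ is inside), and when $t\in V(C)$ your observation that any triangle through a degree-$3$ vertex must bound a face of the triangulation is exactly what is needed to force $G(C)\in\{C,G\}$. The preliminary remark that $G(C)$ is automatically triangulated is also well placed, since Definition~\ref{def:plane3t} requires that as part of being a plane $3$-tree. One minor point: the lemma is stated for $n>3$, so your base case $n=3$ is formally an extension of the claim; this is harmless and, as you implicitly use, necessary for the induction to reach the case $|V(H)|=3$.
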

We now define an interesting structure related to a plane 3-tree, known as the \emph{representative tree}.
\begin{definition} [Representative Tree]\label{defn-rtree}
Let $G$ be a plane $3$-tree with $n$ vertices with outer vertices $a$, $b$ and $c$ and representative vertex $p$ (if $n>3$). 
The representative tree $T$ of $G$ is an ordered tree defined as follows:
\begin{itemize}
\item If $n=3$, then $T$ is an single vertex.
\item  Otherwise, the root of $T$ is $p$ and its subtrees are the representative trees of $G(\triangle apb)$, $G(\triangle bpc)$ and $G(\triangle cpa)$ in that order.
\end{itemize}
\end{definition}

\begin{figure}[t!]
\begin{center}
\leavevmode
\scalebox{0.75}{
\includegraphics{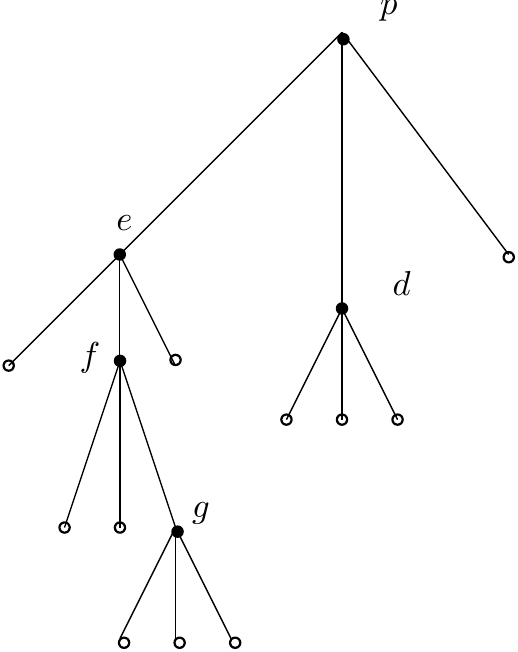}
}
\end{center}
\caption{The representative tree of the plane 3-tree of Figure~\ref{fig:plane3tree}}
\label{fig:rtree}
\end{figure}
The representative tree of the plane 3-tree of Figure~\ref{fig:plane3tree} is presented in Figure~\ref{fig:rtree}. Note that, the representative tree $T$ has $n'=n-3$ internal nodes, each internal node having degree $3$. Also, note that the outer vertices of $G$ are named as $a,b$ and $c$ respectively in counter-clockwise order around $p$. Therefore, the representative tree $T$ of a plane 3-tree $G$ is unique as per Definition~\ref{defn-rtree}. Now consider a plane 3-tree $G$ and its representative tree $T$. Assume that $G'$ is a subgraph of $G$ and $T'$ is a subtree of $T$. Then, $G'$ is referred to as the \emph{corresponding subgraph} of $T'$ if and only if $T'$ is the representative tree of $G'$. There is an $O(n)$ time algorithm to construct the representative tree from a given plane graph~\cite{DBLP:conf/gd/NishatMR10}.

Given a set of points $S$, we use the symbol $P_S(\triangle xyz)$ to denote the set of points that are inside the triangle $\triangle xyz$. We use the symbol  $N_S(\triangle xyz)$ to denote size of the set $P_S(\triangle xyz)$. We will extensively use the triangular range search and counting queries in our algorithm. Below we formally define these two types of queries.
\begin{prb}[Triangular Range Search] \label{prb:trs}
Given a set  $S$ of points that can be preprocessed, we have to answer queries of the form $SetQuery(S,\triangle abc)$, where the query returns $P_S(\triangle abc)$.
\end{prb}

\begin{prb}[Triangular Range Counting] \label{prb:trc}
Given a set  $S$ of points that can be preprocessed, we have to answer queries of the form  $CountQuery(S,\triangle abc)$, where the query returns $N_S(\triangle abc)$.
\end{prb}

In what follows, we will follow the following convention: If an algorithm has preprocessing time $f(n)$ and query time $g(n)$, we will say its overall running time is $\langle f(n), g(n) \rangle$. We conclude this section with an example illustrating the point set embedding of a plane 3-tree.
%
%
%

\begin{figure}[p!]
    \centering
    \subfigure[An example of the point set $S$.]
    {
    \scalebox{0.55}{
        \includegraphics{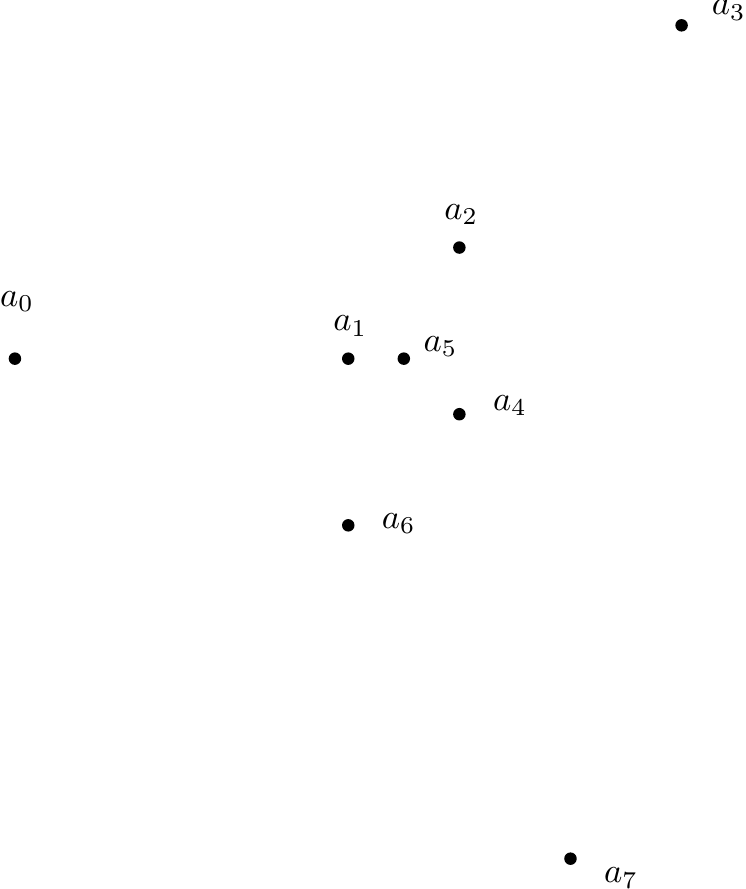}
    }
    \label{fig:pointset}
    }
    \\
    \subfigure[An embedding of the plane 3-tree of Figure~\ref{fig:plane3tree} on the point set of Figure~\ref{fig:pointset}]
    {
    \scalebox{0.6}{
        \includegraphics{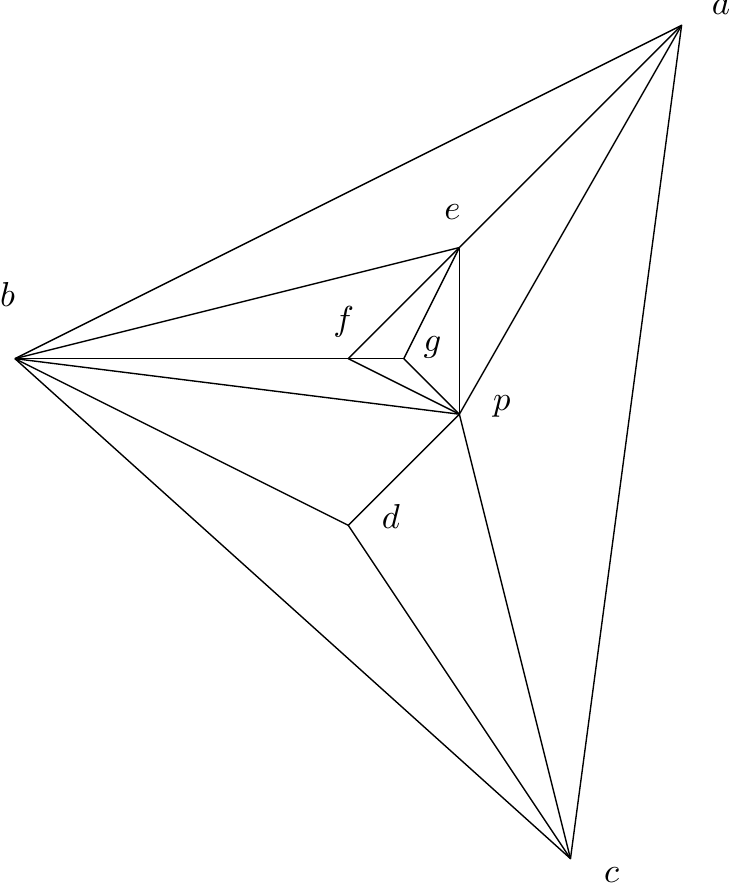}
    }
    \label{fig:embedding}
    }
    \caption{An example of point set embedding of a plane 3-tree.}
    \label{fig:example}
\end{figure}

\begin{my_example}
In Figure~\ref{fig:pointset} we present an example of the point set $S$ having $n = 17$ points. Then, in Figure~\ref{fig:embedding}, an embedding of the plane 3-tree of Figure~\ref{fig:plane3tree} is illustrated.
\end{my_example}

\section{Algorithm of~\cite{DBLP:conf/gd/NishatMR10}} \label{sec:quad}
In this section, we briefly describe the quadratic time algorithm of~\cite{DBLP:conf/gd/NishatMR10}. To simplify the description, we first assume that the points of $S$ are in general positions, i.e., no three points of $S$ are collinear.
\begin{lemma}[\cite{DBLP:conf/gd/NishatMR10}] \label{lmm:conv}
Let $G$ be a plane $3$-tree of $n$ vertices and $S$ be a set of $n$ points. If $G$ admits a point-set embedding on $S$, then the convex hull of $S$ contains exactly three points in $S$.
\end{lemma}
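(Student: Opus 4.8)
The plan is to prove the contrapositive-style structural fact directly: assume $G$ admits a point-set embedding $\Gamma$ on $S$, and show the convex hull of $S$ has exactly three vertices. The key observation is that in $\Gamma$, the outer face of $G$ is a triangle (since $G$ is triangulated, its outer face has exactly the three outer vertices $a,b,c$), and this triangle is drawn with straight-line segments. Every vertex of $G$ — hence every point of $S$, since the embedding is a bijection onto $S$ — lies inside or on the boundary of the triangle $\triangle abc$ drawn in $\Gamma$. Therefore $S$ is contained in the closed triangular region with corners at the images of $a$, $b$, $c$.

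First I would argue that the convex hull of $S$ has at most three vertices: since $S$ lies inside a triangle whose corners are three of the points of $S$, the convex hull of $S$ is contained in that triangle, so no point of $S$ outside $\{a,b,c\}$ can be a vertex of the convex hull — any such point is a convex combination of the three corners, and under the general position assumption it cannot even lie on an edge of the triangle (that would force three collinear points of $S$). Next I would argue the hull has at least three vertices: the three images of $a,b,c$ are not collinear (they bound a non-degenerate face of a straight-line drawing), so they are three affinely independent points of $S$, and each is extreme (a corner of a triangle containing all of $S$ cannot be written as a convex combination of the others). Hence exactly these three points are the convex hull vertices.

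The main thing to be careful about — rather than a deep obstacle — is justifying that the outer boundary of the straight-line drawing is genuinely a triangle containing all other points, i.e., that a point-set embedding respects the given plane embedding of $G$ so that $a,b,c$ really are drawn as the outer face. This follows from the definition of point-set embedding as a straight-line drawing of the \emph{plane} graph $G$ (Problem~\ref{prb:ptemb} together with Definition~\ref{prb:strln}), combined with the fact, noted earlier in the text, that a plane $3$-tree is triangulated and so has an outer face with exactly three vertices. The general position assumption (in force for this part of the paper) cleanly rules out the degenerate possibility of a fourth hull point sitting on a triangle edge, so no separate case analysis is needed.
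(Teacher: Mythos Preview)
Your argument is correct and is the natural one. Note, however, that the present paper does not supply its own proof of this lemma: it is quoted from Nishat et~al.\ and stated without proof, so there is nothing in the paper to compare against directly. What you have written is essentially the standard justification and would be accepted as a proof.

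One small remark: you lean on the general-position assumption to rule out a fourth point of $S$ lying on an edge of the outer triangle. This is fine given the standing assumption in that section, but it is not actually needed. If some vertex $v\notin\{a,b,c\}$ were drawn on the open segment between the images of $a$ and $b$, then the straight-line edge $ab$ of $G$ would pass through the image of $v$, so any edge incident to $v$ would meet $ab$ at a non-endpoint --- contradicting that $\Gamma$ is a planar straight-line drawing. Thus the conclusion holds even without assuming points are in general position, matching the unqualified statement of the lemma.
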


\begin{lemma}[\cite{DBLP:conf/gd/NishatMR10}] \label{lmm:uniqp}
Let $G$ be a plane $3$-tree of $n$ vertices with $a$, $b$ and $c$ being the three outer vertices of $G$, and let $p$ be the representative vertex of $G$. Let $S$ be a set of $n$ points such that the convex hull of S contains exactly three points. Assume that $G$ has a point-set embedding $\Gamma(G)$ on $S$ for a given mapping of $a$, $b$ and $c$ to the three points of the convex-hull of $S$. Then $p$ has a unique valid mapping.
\end{lemma}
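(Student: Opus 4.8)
The plan is to exploit the structure of the representative tree together with the characterization of point-set embeddings of plane $3$-trees. Fix a point-set embedding $\Gamma(G)$ with $a,b,c$ mapped to the three convex-hull points of $S$. The representative vertex $p$ must be placed at some point $q \in S \setminus \{a,b,c\}$; I claim $q$ is forced. The key observation is that in any valid embedding, the three triangles $\triangle apb$, $\triangle bpc$, $\triangle cpa$ together partition the interior of $\triangle abc$, and the point sets lying inside them must be exactly the point sets on which the subgraphs $G(\triangle apb)$, $G(\triangle bpc)$, $G(\triangle cpa)$ are drawn. Since $G(\triangle apb)$ is itself a plane $3$-tree (by the lemma quoted above) on a known number of vertices — say $n_1$ vertices, where $n_1$ is read off from the size of the corresponding subtree of the representative tree $T$ — the triangle $\triangle apb$ must contain exactly $n_1 - 3$ points of $S$ strictly in its interior, and similarly for the other two triangles.

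The main step is to turn these counting constraints into a uniqueness argument. First I would show existence of at most one valid location: suppose $q$ and $q'$ are both candidate placements for $p$. If $q' \neq q$, then $q'$ lies strictly inside exactly one of $\triangle aqb$, $\triangle bqc$, $\triangle cqa$, say $\triangle aqb$ (it cannot lie on a boundary segment by the general-position assumption, and it cannot be outside $\triangle abc$). Then moving the apex from $q$ to $q'$ strictly shrinks one of the three regions and strictly enlarges another: concretely $\triangle aq'b \subsetneq \triangle aqb$ while $\triangle bqc \subsetneq \triangle bq'c$ (one checks this by a direct geometric containment argument, using that $q'$ is inside $\triangle aqb$). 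Hence $N_S(\triangle aq'b) < N_S(\triangle aqb)$ strictly, because $q$ itself is a point of $S$ that lies in $\triangle aqb$ but not in $\triangle aq'b$, while no point of $S$ can be gained. But both $q$ and $q'$ are supposed to satisfy $N_S(\triangle a \cdot b) = n_1 - 3$ for the \emph{same} value $n_1$ determined by $T$, a contradiction. Therefore the placement of $p$, if it exists, is unique.

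To make the argument fully rigorous I would be careful about two points. The first is the geometric containment claim: that if $q'$ is interior to $\triangle aqb$ then $\triangle aq'b \subseteq \triangle aqb$ and $\triangle aqb \subseteq \triangle aq'b \cup \triangle bq'c \cup \triangle cq'a$ with the relevant strictness. This follows because the ray from $a$ through $q$ and the ray from $b$ through $q$ each sweep monotonically as the apex moves toward the base edge $ab$; I would phrase it in terms of the two half-planes bounded by lines $aq'$ and $bq'$. The second point is that the value $n_1 = |V(G(\triangle apb))|$ is an invariant of $G$ and its fixed outer labelling — it does not depend on the embedding — which is exactly what the uniqueness of the representative tree $T$ (noted after Definition~\ref{defn-rtree}) gives us. I expect the geometric containment/monotonicity step to be the main obstacle, since everything else is bookkeeping with the representative tree; once that monotonicity of $N_S(\triangle a\,\cdot\,b)$ in the apex position is established, uniqueness of the valid mapping of $p$ is immediate.
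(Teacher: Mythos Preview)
The paper does not supply its own proof of this lemma; it is quoted from Nishat et al.\ and used as a black box, so there is no in-paper argument to compare against directly. That said, your counting argument is exactly the one underlying the algorithm in Section~\ref{sec:quad}: the three subgraph sizes $n_1,n_2,n_3$ are invariants of $G$ (via the uniqueness of the representative tree), any valid image $q$ of $p$ must split the interior points of $\triangle abc$ into groups of those prescribed sizes, and a second valid image $q'$ lying in the interior of $\triangle aqb$ would force $\triangle aq'b\subsetneq\triangle aqb$ and hence $N_S(\triangle aq'b)<N_S(\triangle aqb)$ (the point $q'\in S$ is lost), contradicting that both counts equal $n_1-3$. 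This is correct and is the standard proof.

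One small correction: the companion claim you state, that $\triangle bqc\subsetneq\triangle bq'c$ whenever $q'\in\operatorname{int}(\triangle aqb)$, is \emph{false} in general. For a concrete counterexample take $a=(0,0)$, $b=(1,0)$, $c=(\tfrac12,1)$, $q=(\tfrac12,\tfrac{3}{10})$ and $q'=(\tfrac{9}{10},\tfrac{1}{20})$; then $q'\in\operatorname{int}(\triangle aqb)$ but $q\notin\triangle bq'c$, so $\triangle bqc\not\subseteq\triangle bq'c$. Fortunately your proof never actually uses this ``enlarging'' direction---the strict shrinking of $\triangle a\,\cdot\,b$ already yields the contradiction---so you should simply delete that clause. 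With that edit (and the routine verification that $\operatorname{int}(\triangle aq'b)\subseteq\operatorname{int}(\triangle aqb)$, which follows since $q'$ is strictly interior to $\triangle aqb$), the argument is complete.
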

The algorithm of~\cite{DBLP:conf/gd/NishatMR10} performs the following steps to find a valid embedding of $G$ in a given point-set $S$ if one exists.
\begin{enumerate}[Step: 1] \label{algo:quad}
\item Find the convex hull of the given points. If the convex hull does not have exactly $3$ points, return the message that no embedding exists.
\item For each of the possible $6$ mappings of the outer vertices of $G$ to the three points of the convex hull, perform Steps~\ref{step:start} and~\ref{step:start1} (recursively).
\item \label{step:start} Assume that at the beginning of this step, we are considering the representative (sub)tree $T'$ and the corresponding graph is $G'$ (obviously a subgraph of $G$). Let the three outer vertices of $G'$ are $a'$, $b'$ and $c'$ and the representative vertex of it is $p'$. Note that, initially, $G' = G$, $T' = T$ and the outer vertices and the representative vertex are $a$, $b$, $c$  and $p$ respectively. Assume that the number of internal nodes in $T'$ is $n'$. Note that, number of vertices in the corresponding graph $G'$ is $n'+3$. If $n'=0$ then embedding is trivially possible and this step returns immediately terminating the recursion. Otherwise, the following step is executed to check whether an embedding is indeed possible.
\item \label{step:start1} Let the root of $T'$ be $r$. Also let the three children of $r$ be $r_1$, $r_2$ and $r_3$ and the number of internal nodes in the subtrees rooted at $r_1$, $r_2$ and $r_3$ be $n_1'$, $n_2'$ and $n_3'$ respectively. Note that $n'=n_1'+n_2'+n_3'+1$. Let the three outer vertices $a'$, $b'$ and $c'$ of $G'$ are mapped to points $x$, $y$ and $z$ of $S$. Now, we find a point $u$ in $S$ such that $N_S(\triangle xuy)=n_1'$, $N_S(\triangle yuz)=n_2'$, and $N_S(\triangle zux)=n_3'$. By Lemma \ref{lmm:uniqp}, $u$ is unique if it exists. To find $u$, all the points of $S$ lying within the triangle $\triangle xyz$ are checked. If $u$ can be found, then $p$ is mapped to $u$ and Steps \ref{step:start} and \ref{step:start1} are executed recursively for all three subtrees of $T'$; otherwise no embedding is possible.
\end{enumerate}
In what follows, we will refer to this algorithm as the NMR Algorithm. Naive implementation of NMR algorithm runs in $O(n^3)$ time~\cite{DBLP:conf/gd/NishatMR10}. By sorting all points of $S$ according the polar angle with respect to each point of $S$ and employing some non-trivial observations, this algorithm can be made to run in $O(n^2)$ time~\cite{DBLP:conf/gd/NishatMR10}. Note that, the $O(n^2)$ algorithm assumes that the points of $S$ are at general positions. If this assumption is removed, NMR algorithm runs in $O(n^2\log n)$ time.

\section{Our Result} \label{sec:subq}
In this section, we modify the algorithm of~\cite{DBLP:conf/gd/NishatMR10} described in Section \ref{sec:quad} and achieve a better running time. 
For the ease of exposition, we assume for the time being that triangular range search has running time $\langle f(|S|),g(|S|)+\ell\rangle$ and triangular range counting has running time $\langle f(|S|),g(|S|)\rangle$, where $S$ is the input set and $\ell$ is the output size for triangular range search. We will finally use the actual running times during the analsysis of the algorithm. We first present our modified algorithm below followed by a detailed running time analysis.

\begin{enumerate}[Step 1:] \label{algo:subq}
\item \label{step:subq:chull}  Find the convex hull of the points of $S$. By Lemma \ref{lmm:conv} the convex hull should have $3$ points, otherwise no embedding exists.

\item \label{step:subq:trp} Preprocess the points of $S$ for triangular range search and triangular range counting.

\item \label{step:subq:map} For each of the possible $6$ mappings of the outer vertices of $G$ to the three points of the convex hull, perform Steps~\ref{step:subq:start} to~\ref{step:subq:check} (recursively).

\item \label{step:subq:start} We take the same assumptions as we took at Step~\ref{step:start} of the NMR algorithm.
Now, if $n'=0$ then embedding is trivially possible and this step returns immediately terminating the recursion. Otherwise, the following step is executed to check whether an embedding is indeed possible.


\item \label{step:subq:prune2}
Now we want to find a point $u$ such that $N_S(\triangle xuy)=n_1'$, $N_S(\triangle yuz)=n_2'$, and $N_S(\triangle zux)=n_3'$. Recall that, by lemma \ref{lmm:uniqp} such a point is unique if it exists.  Now, without loss of generality we can assume that $n_2' \leq \min(n_1', n_3')$. In order to find $u$, we first find points $v_1$ and $v_2$ on the line $yz$ such that $N_S(\triangle xv_1y)=n_1'$ and $N_S(\triangle xv_2z)=n_3'$. Note carefully that, in line $yz$, $v_1$ appears closer to $y$ than $v_2$; otherwise there will not be $n'$ points inside the triangle $\triangle xyz$. We will use a binary search and triangular range counting queries to find $v_1, v_2$ as follows. We first choose the mid point $w$ of the line $BC$. Then we compute $N_S(\triangle xwy)$ using a triangular range counting query. If $N_S(\triangle xwy) = n_1'$ we are done and we assign $v_1 = w$. Otherwise, if $N_S(\triangle xwy) > n_1'$ ($N_S(\triangle xwy) < n_1'$), then we choose the mid point $w'$ of the line $Bw$ ($wC$). Then we perform similar checks on $\triangle xw'y$. The point $v_2$ can also found similarly. Clearly, there always exist such points and steps of binary search is bounded by $O(\log N)$, where $N$ is the maximum absolute value of a point of $S$ in any coordinate.

\item \label{step:subq:check}
With points $v_1$ and $v_2$ at our disposal, we now try to find point $u$. Note that the point $u$ cannot be inside either $\triangle xv_1y$ or $\triangle xv_2z$. This is because if $u$ is in $\triangle xv_1y$ then $N_S(\triangle xuy) < N_S(\triangle xv_1y)= n_1'$ implying $N_S(\triangle xuy) < n_1'$, a contradiction. A similar argument is possible for $\triangle xv_2z$. So, we must have $u \in P_S(\triangle xv_1v_2)$. Also note that $N_S(\triangle xv_1v_2)=N_S(\triangle xyz)-N_S(\triangle xv_1y)-N_S(\triangle xv_2z)=n'-n_1'-n_3'=n_2'+1$. Using triangular range search we now find the points $P_S(\triangle xv_1v_2)$. To find $u$, we now simply check whether any of these points satisfies the requirement $N_S(\triangle xuy)=n_1'$, $N_S(\triangle yuz)=n_2'$, and $N_S(\triangle zux)=n_3'$. If no such point exists, then we return stating that it will be impossible to embed the graph on the points. Otherwise we find a point $u$, which is mapped to vertex $p$. Now Steps \ref{step:subq:start} to \ref{step:subq:check} are recursively executed for all three subtrees.

\end{enumerate}
\subsection{Analysis}\label{sec:analysis}
Now we analyze our modified algorithm presented above.
Step \ref{step:subq:chull} is executed once and can be done in  $O(n \log n)$ time. Step \ref{step:subq:trp} is executed once and can be done in  $f(|S|)$ time. Steps \ref{step:subq:start} to \ref{step:subq:check} are executed recursively. Step \ref{step:subq:start} basically gives us the terminating condition of the recursion. We focus on Step \ref{step:subq:prune2} and Step \ref{step:subq:check} separately below.

In Step \ref{step:subq:prune2}, we find out the two points $v_1$ and $v_2$ using binary search and triangular range counting queries. Time required for this step is $O(g(|S|)\log N)$. Note carefully that both the parameters $|S|$ and $N$ are constant in terms of recursion. Also, it is easy to see that, overall, Step \ref{step:subq:prune2} is executed once for each node in $T$. Hence, the overall running time of this step is $O(g(|S|)n\log N)$.

Now we focus on Step~\ref{step:subq:check}. Time required for triangular range search in Step \ref{step:subq:check} is  $O(g(|S|)+n_2')$. In this step we also need $O(n_2')$ triangular range counting queries which add up to $O(g(|S|)n_2')$ time. Recall that, $n_2' = \min(n_1', n_3')$, i.e., $n_2'$ is the number internal nodes of the subtree having the least number of internal nodes. Hence, we have $n_2' \leq n'/3$. Now, the overall running time of Step~\ref{step:subq:check} can be expressed using the following recursive formula: $T(n') = T(n_1')+T(n_2')+T(n_3')+ n_2' g(|S|))$, where $n_2' \leq \min(n_1',n_3')$. Now we have the following theorem:
        \begin{theorem}
        The overall running time of Step~\ref{step:subq:check} is $O(g(|S|)~ n\log n)$.
        \end{theorem}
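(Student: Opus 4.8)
\noindent The plan is to unfold the recurrence
$T(n') = T(n_1') + T(n_2') + T(n_3') + n_2'\, g(|S|)$,
where $n_2'$ is the smallest of the three child sizes and $n_1'+n_2'+n_3' = n'-1$, by a charging argument over the nodes of the representative tree $T$. The observation driving everything is that the additive cost $n_2'\, g(|S|)$ paid when Step~\ref{step:subq:check} processes an internal node $r$ of $T$ is proportional to the number of internal nodes in the \emph{smallest} of the three subtrees hanging from $r$.

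First I would redistribute this cost: when $r$ is processed, charge $g(|S|)$ to each of the $n_2'$ internal nodes that lie in the smallest child subtree of $r$. Then the total time spent inside Step~\ref{step:subq:check} over the whole recursion equals $g(|S|)\sum_v c(v)$, where $c(v)$ counts the ancestors $r$ of $v$ in $T$ for which $v$ falls in the smallest of the three child subtrees of $r$. It therefore suffices to prove $\sum_v c(v) = O(n\log n)$, and for that I would bound $c(v) = O(\log n)$ for every single internal node $v$.

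The key step is the following. Follow the root-to-$v$ path in $T$ and track the size (number of internal nodes) of the subtree currently containing $v$. Whenever $v$ lies in the smallest of the three child subtrees of the current root, which has size $m$, that subtree has at most $\lfloor (m-1)/3\rfloor < m/3$ internal nodes, since the three child sizes are nonnegative and sum to $m-1$. Hence every event counted by $c(v)$ shrinks the subtree containing $v$ by a factor larger than $3$; as this size is a positive integer bounded above by $n$, at most $\log_3 n$ such events occur, so $c(v)\le\log_3 n$. Summing over the at most $n$ internal nodes of $T$ gives $T(n') = O(g(|S|)\, n\log n)$, and since the top-level invocation has $n'=n-3$ while Step~\ref{step:subq:map} ranges over only a constant number ($6$) of mappings, the stated bound for all of Step~\ref{step:subq:check} follows.

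I expect the only real obstacle to be using the ``$\min$'' correctly: with the naive estimate $n_i'\le n'$ the recurrence telescopes only to $O(g(|S|)\, n^2)$, so one must exploit $n_2'\le(n'-1)/3$. The same fact also supports an alternative direct induction, proving $T(m)\le c\, g(|S|)\, m\log_2 m$ with $c = 1/\log_2 3$: expanding the recurrence and using $\log_2 n_2'\le\log_2 n' - \log_2 3$ produces a surplus term $-c\, g(|S|)\, n_2'\log_2 3$ that cancels the additive $n_2'\, g(|S|)$ exactly, after the constant-size base cases $n'\le 3$ (where $T(n')=O(1)$) are checked by hand.
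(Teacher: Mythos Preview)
Your proof is correct. The primary argument you give---charging the additive cost $n_2'\,g(|S|)$ to the internal nodes of the smallest child subtree and then bounding, for each node $v$, the number of ancestors at which $v$ lies in the smallest branch---is a different route from the paper's. The paper proceeds by a direct strong induction: it assumes $T(m)\le c\,g(|S|)\,m\log m$ (with $c\ge 1$), expands the recurrence, and uses only $n_2'<n'/2$ so that $\log n_2'\le \log n'-1$ cancels the additive $n_2'\,g(|S|)$ term. Your charging argument is more combinatorial and actually exploits the sharper inequality $n_2'\le(n'-1)/3$, yielding the slightly better hidden constant $1/\log_2 3$ rather than $1$; it also generalizes cleanly to $k$-ary splits. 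The alternative you sketch at the end is essentially the paper's own argument, only with the tighter factor $3$ in place of $2$.
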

        \begin{proof}
        First, assume that $T(n)\le c(n\log n) g(|S|), c\geq 1$. Then we have,
        \begin{align*}
            T(n') &= T(n_1')+T(n_2')+T(n_3')+ n_2' g(|S|)\\
	        &\leq c(n_1'\log n_1') g(|S|) + c(n_2'\log n_2') g(|S|) + c(n_3'\log n_3') g(|S|) +  n_2' g(|S|)\\
        	&< c(n_1'\log n') g(|S|) + c(n_2'\log \frac{n'}{2}) g(|S|) + c(n_3'\log n') g(|S|) +  c\times n_2' g(|S|)\\
        	&=  c(n_1'\log n') g(|S|) + c(n_2'(-1+\log n')) g(|S|) + c(n_3'\log n') g(|S|) +  cn_2' g(|S|) \\
        	&=c g(|S|) (-n_2'+(n_1' +n_2'+n_3') \log n' + n_2' )\\
        	&= c g(|S|) n' \log n'
        \end{align*}
        This completes the proof.
        \qed \end{proof}


Based on the above discussion, total time required for this algorithm is $O(n\log n + f(|S|)+ n g(|S|) \log N + n g(|S|) \log n) = O(f(|S|) + n~g(|S|) (\log n + \log N))$. Now we are ready to replace $f(|S|)$ and $g(|S|)$ with some concrete values. To the best of our knowledge, the best result of triangular range search and counting queries is due to Chazelle et al.~\cite{ALGOR::ChazelleSW1992}. In particular, Chazelle et al. proposed a solution for the triangular range search queries in~\cite{ALGOR::ChazelleSW1992} with time complexity $\langle O(m^{1+\epsilon}),O(n^{1+\epsilon}/m^{1/2})\rangle$, where $n<m<n^2$. Using this result the running time of our algorithm becomes $O(m^{1+\epsilon}+ (\log n + \log N) n^{2+\epsilon}/m^{1/2})$, which reduces to $O(n^{4/3+\epsilon}+ (\log n + \log N) n^{4/3+\epsilon})$ if we choose $m=n^{4/3}$.

Finally, we can safely ignore the $\log N$ component from our running time as follows. Firstly, the $\log N$ component becomes significant only when $N$ is doubly exponential in $n$ or larger, which is not really practical. Secondly, while we talk about the theoretical running time of algorithms, we often ignore the inherent $O(\log N)$ terms assuming that two (large) numbers can be compared in $O(1)$ time. For example, in comparison model, since sorting $n$ (large) numbers having maximum value $N$ requires $\Theta(n \log n)$ comparisons we usually say that sorting requires $\Theta(n \log n)$ time. Essentially, here, we ignore the fact that comparing two numbers actually requires $\Omega(\log N)$ time. Notably, the algorithm of ~\cite{DBLP:conf/gd/NishatMR10} also has an hidden $O\log N$ term since it requires $O(n^2)$ comparisons each of which actually requires $O(\log N)$ time. One final note is that for instances where the $\log N$ term does have significant effect, we can in fact get rid of the term using standard techniques to transform a counting algorithm into a ranking algorithm at the cost of a $\log n$ time increase in the running time. Similar techniques are also applicable for the algorithm of ~\cite{DBLP:conf/gd/NishatMR10}. So, we have the following theorem.

\begin{theorem}\label{thm-generalPos}
The point set Embeddability problem can be solved in $O(n^{4/3+\epsilon}\log n)$ time if the input graph is a plane 3-tree and $S$ does not contain any three points that are collinear.
\end{theorem}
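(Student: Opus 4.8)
The plan is to assemble the correctness argument and the running-time bookkeeping developed through Section~\ref{sec:subq}. First I would establish correctness of the modified algorithm. By Lemma~\ref{lmm:conv}, any point set admitting an embedding has a convex hull containing exactly three points, so Step~\ref{step:subq:chull} is a sound early-exit test. By Lemma~\ref{lmm:uniqp}, once the outer vertices $a,b,c$ are mapped to the three hull points, the representative vertex $p$ has a unique valid image, so enumerating all $6$ hull mappings in Step~\ref{step:subq:map} is exhaustive. The crux is that Steps~\ref{step:subq:prune2}--\ref{step:subq:check} correctly locate this unique image $u$: the binary search along segment $yz$ yields points $v_1,v_2$ with $N_S(\triangle xv_1y)=n_1'$ and $N_S(\triangle xv_2z)=n_3'$, and monotonicity of $N_S(\triangle xwy)$ as $w$ slides along $yz$ guarantees convergence. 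Then the containment argument already given---$u$ can lie in neither $\triangle xv_1y$ nor $\triangle xv_2z$, hence $u\in P_S(\triangle xv_1v_2)$, a region holding only $n_2'+1$ points---restricts the candidates to a small set, each tested with three counting queries. Recursing on the three subtrees of the representative tree, which are themselves plane $3$-trees by the Lemma of~\cite{DBLP:conf/gd/NishatMR10}, reproduces the NMR correctness inductively.

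Second I would add up the costs. Step~\ref{step:subq:chull} costs $O(n\log n)$; Step~\ref{step:subq:trp} costs $f(|S|)$; Step~\ref{step:subq:prune2}, executed once per node of the representative tree, costs $O(g(|S|)\log N)$ each, hence $O(n\,g(|S|)\log N)$ in total; and, by the Theorem bounding the running time of Step~\ref{step:subq:check}, the recursive part costs $O(g(|S|)\,n\log n)$. Summing, the overall running time is $O(f(|S|)+n\,g(|S|)(\log n+\log N))$.

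Finally I would instantiate $f$ and $g$ with the triangular range search and counting bounds of Chazelle et al.~\cite{ALGOR::ChazelleSW1992}, namely $\langle O(m^{1+\epsilon}),O(n^{1+\epsilon}/m^{1/2})\rangle$ for $n<m<n^2$, choose $m=n^{4/3}$ to balance the two resulting terms, and absorb the $\log N$ factor using the discussion about the comparison model (or, where it genuinely matters, via the counting-to-ranking transformation at an extra $\log n$ cost). This yields the claimed $O(n^{4/3+\epsilon}\log n)$ bound.

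The main obstacle is the recursive accounting of Step~\ref{step:subq:check}, i.e.\ showing that the total number of triangular range counting queries over the whole recursion is $O(n\log n)$ rather than something like $O(n^2)$; but this is exactly the content of the already-proven recurrence $T(n')=T(n_1')+T(n_2')+T(n_3')+n_2'g(|S|)$ with $n_2'\le n'/3$, which solves to $T(n)=O(g(|S|)\,n\log n)$. The structural fact making it work is that we always charge the $n_2'$ candidate checks to the smallest of the three subtrees, so each node of the representative tree is charged only $O(\log n)$ times. Everything else reduces to routine summation and substitution.
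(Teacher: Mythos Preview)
Your proposal is correct and follows essentially the same approach as the paper: the same step-by-step cost breakdown, the same recurrence $T(n')=T(n_1')+T(n_2')+T(n_3')+n_2'\,g(|S|)$ solved to $O(g(|S|)\,n\log n)$, the same substitution of the Chazelle et al.\ bounds with $m=n^{4/3}$, and the same dismissal of the $\log N$ factor. Your added correctness discussion (via Lemmas~\ref{lmm:conv} and~\ref{lmm:uniqp} and the monotonicity of $N_S(\triangle xwy)$) is sound and is implicit in the paper's presentation of the algorithm rather than stated in the analysis itself.
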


%

\subsection{For points not in General positions}
So far we have assumed that the points of $S$ are in general positions, i.e., no three points in $S$ are collinear. We now discuss how to get around this assumption. Note that, the algorithm of Nishat et al~\cite{DBLP:conf/gd/NishatMR10} shows improved performance of $O(n^2)$ when the points of $S$ are in general positions. Now, if we remove our assumption, then we may have more than two points that are collinear. In this case, the only only modification needed in our algorithm is in Step \ref{step:subq:prune2}. Now, the problem is that the two points $v_1$ and $v_2$ could not be found readily as before. More specifically, even if Step \ref{step:subq:prune2} returns that $v_1$ and $v_2$ do not exist, still $u$ may exist. Now note that, in Step \ref{step:subq:prune2}, we want to find out $v_1$ and $v_2$ to ensure that $N_S(\triangle xv_1v_2)= n_2'+1$, where $n_2' = \min(n_1', n_3')$, i.e., $n_2' \leq n'/3$. Since, we have to check each points of $P_S(\triangle xv_1v_2)$ (to find $u$), the above bound of $n_2' \leq n'/3$ provides us with the required efficiency in our algorithm.

To achieve the same efficiency, we now slightly modify Step \ref{step:subq:prune2}. Suppose we are finding $v_1$ ($v_2$). We now try to find $v_1$ ($v_2$) such that the $N_S(\triangle xv_1y) > n_1'$ ($N_S(\triangle xv_2z) > n_3'$) and $v_1$ ($v_2$)  is as near as possible to $B$ ($C$) on the line $BC$. Let us assume that we need $\mathcal I$ iterations now to find $v_1$ ($v_2$). We have the following bound for $\mathcal I$.
\begin{lemma}
$\mathcal I$ is bounded by $O(\log N)$
\end{lemma}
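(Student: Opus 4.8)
The plan is to read the loop as a binary search for a breakpoint of a monotone step function and to bound its length by the bit‑complexity of the data. Parametrise the position of $v_1$ by the signed distance $t$ from $B$ along the line $BC$, so that $t=0$ is $B$ and $t$ increases towards $C$, and put $\phi(t)=N_S(\triangle x\,v_1(t)\,y)$. As $v_1$ slides along $BC$, a point $s\in S$ can enter or leave $\triangle x v_1 y$ only at the instant $v_1$ passes the point $q_s$ where the line through $x$ and $s$ meets the line $BC$; hence $\phi$ is non‑decreasing and piecewise constant, and all of its breakpoints lie in the finite set $Q=\{q_s:s\in S\}\cap BC$. The modified step is looking for the least $t$ at which $\phi$ first exceeds $n_1'$, which is one of these breakpoints; moreover it lies strictly between $t=0$, where $\phi=0\le n_1'$ (the triangle $\triangle xyy$ is degenerate), and $t=|BC|$, where $\phi=N_S(\triangle xyz)=n'>n_1'$ since $n'=n_1'+n_2'+n_3'+1$.

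Next I would quantify how finely the breakpoints are spread out along $BC$. All coordinates of $x$, of the points of $S$, and of $B$ and $C$ are integers of absolute value at most $N$. Writing the line through $x$ and $s$, and the line $BC$, as linear equations with integer coefficients of magnitude $O(N)$ in the linear terms and $O(N^2)$ in the absolute term, Cramer's rule expresses each coordinate of $q_s$ as a rational with numerator $O(N^3)$ and denominator $O(N^2)$. Consequently any two distinct breakpoints, and any breakpoint and either endpoint of $BC$, differ in the parameter $t$ by at least $\delta$ for some $\delta=N^{-O(1)}$ (a nonzero difference of rationals with denominators $O(N^2)$, rescaled by $|BC|=O(N)$).

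Then I would track the invariant of the search: after $k$ iterations we hold an interval $[\ell_k,r_k]\subseteq[0,|BC|]$ with $\phi(\ell_k)\le n_1'<\phi(r_k)$ and $r_k-\ell_k\le|BC|/2^k$, each iteration costing one triangular range counting query and one geometric bisection (which keeps the endpoints rational). As soon as $r_k-\ell_k<\delta$, the interval contains at most one breakpoint of $\phi$, and since $\phi$ jumps across $n_1'$ inside it, that breakpoint is exactly the target position; one further exact line intersection then yields $v_1$. This occurs once $|BC|/2^k<\delta$, i.e. after $k=O(\log(|BC|/\delta))=O(\log N)$ iterations, and the argument for $v_2$ is the same with the roles of $B,C$ and of $n_1',n_3'$ interchanged. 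Hence $\mathcal I=O(\log N)$.

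The step I expect to be the real obstacle is the degeneracy bookkeeping behind $\phi$. With collinear points one must fix once and for all whether a point of $S$ lying on the line $BC$ or on a sweep line $x v_1$ counts as inside $\triangle x v_1 y$, and then verify that under that convention $\phi$ is still monotone, its breakpoints still obey the separation bound, and it still jumps strictly past $n_1'$ at a single well‑defined place — this is precisely the phenomenon (several points on one line through $x$, or on $BC$) that made the modification of Step~\ref{step:subq:prune2} necessary. The separation estimate $\delta=N^{-O(1)}$ is then routine linear algebra, and the remaining accounting is the standard binary‑search argument.
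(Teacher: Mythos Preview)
Your proposal is correct and follows essentially the same approach as the paper: both argue that the breakpoints along $BC$ are intersections of two lines, each determined by a pair of points of $S$, hence rationals with denominators $O(N^2)$, so precision $N^{-O(1)}$---and thus $O(\log N)$ bisections---suffices. Your version simply spells out in detail (the monotone step function $\phi$, the separation bound via Cramer's rule, the degeneracy bookkeeping) what the paper compresses into a two-sentence sketch of the same precision argument.
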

\begin{proof}
There may not be any point candidate of $v_1$ ($v_2$) which has integer coordinates. But as $x$ can be intersection of two lines, each of which goes through two points of $S$, there may exists a point candidate of $v_1$ having denominator less than $N^2$ or there is none. Either way, to find such a point or to be sure no such point exists we only need precision less than $1/N^2$. Therefore, $O(\log N)$ iterations are sufficient.
\qed \end{proof}

Again, the argument presented at the end of Section~\ref{sec:analysis} about the component $\log N$ applies here as well. Therefor, the result of Theorem~\ref{thm-generalPos} holds even the points of $S$ are not in general positions. So, we restate our stronger and more general result as follows.

\begin{theorem}\label{thm-nogeneralPos}
The point set Embeddability problem can be solved in $O(n^{4/3+\epsilon}\log n)$ time if the input graph is a plane 3-tree.
\end{theorem}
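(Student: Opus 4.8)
The plan is to bootstrap off Theorem~\ref{thm-generalPos}, which already establishes the bound $O(n^{4/3+\epsilon}\log n)$ under the general-position hypothesis, and to argue that dropping that hypothesis changes nothing in the algorithm except for a single sub-step. First I would revisit the analysis of Section~\ref{sec:analysis} and check that every step other than Step~\ref{step:subq:prune2} is insensitive to collinearities: the convex-hull computation in Step~\ref{step:subq:chull}, the preprocessing in Step~\ref{step:subq:trp}, the recursion controlled by Step~\ref{step:subq:start}, and the triangular range search and counting machinery invoked in Step~\ref{step:subq:check} are all correct and retain their running-time guarantees for an arbitrary point set $S$. In particular Lemma~\ref{lmm:conv} and Lemma~\ref{lmm:uniqp} only assume that the convex hull of $S$ has exactly three points, so the correctness skeleton of the recursion and the uniqueness of the representative vertex $u$ are untouched.

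Next I would isolate the one place where collinearities bite, namely the binary search in Step~\ref{step:subq:prune2} that locates the two auxiliary points $v_1,v_2$ on the segment $yz$. When three or more points of $S$ are collinear with $x$, there may be no point on $yz$ with $N_S(\triangle xv_1y)$ exactly equal to $n_1'$, so the unmodified search can report failure even though a valid $u$ exists. The remedy is to relax the target as described in the text: search instead for the point $v_1$ on $yz$, as close to $B$ as possible, with $N_S(\triangle xv_1y) > n_1'$ (symmetrically, $v_2$ near $C$ with $N_S(\triangle xv_2z) > n_3'$). I would then verify that $u$, if it exists, still lies in $P_S(\triangle xv_1v_2)$, and that $N_S(\triangle xv_1v_2)$ is still $O(n_2')$, so that the bound $n_2'\le n'/3$ that drives the recurrence $T(n')=T(n_1')+T(n_2')+T(n_3')+n_2'\,g(|S|)$ continues to hold; consequently Step~\ref{step:subq:check} still runs in $O(g(|S|)\,n\log n)$ time in total, exactly as in the general-position case.

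I would then invoke the iteration-count lemma proved just above: since $x$ is the intersection of two lines each passing through two points of $S$, any candidate breakpoint on $yz$ has rational coordinates with denominator at most $N^2$, so $O(\log N)$ bisection steps suffice to locate $v_1$ (or to certify that no such point exists), matching the $O(\log N)$ factor of the original binary search. Hence the overall running time is still $O(f(|S|)+n\,g(|S|)(\log n+\log N))$; substituting the bounds of Chazelle et al., $f(|S|)=O(m^{1+\epsilon})$ and $g(|S|)=O(n^{1+\epsilon}/m^{1/2})$ with $m=n^{4/3}$, yields $O(n^{4/3+\epsilon}(\log n+\log N))$. Finally I would reuse the closing paragraph of Section~\ref{sec:analysis} verbatim to discharge the $\log N$ term — either observing it is negligible unless $N$ is doubly exponential in $n$, or applying the standard counting-to-ranking transformation at the cost of one extra $\log n$ factor — which leaves the stated $O(n^{4/3+\epsilon}\log n)$ bound.

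The main obstacle I expect is the correctness argument for the relaxed Step~\ref{step:subq:prune2}: one must argue carefully that choosing $v_1$ with the strict inequality $N_S(\triangle xv_1y)>n_1'$ and extremal position toward $B$ never excludes the true point $u$ — a point of $S$ lying on segment $xv_1$ or on the boundary of $\triangle xv_1y$ needs to be handled with care — and that $N_S(\triangle xv_1v_2)$ does not exceed $O(n_2')$ even when many points of $S$ are collinear with $x$. Once these two facts are in hand, the remainder is simply a re-run of the general-position analysis, so Theorem~\ref{thm-nogeneralPos} follows.
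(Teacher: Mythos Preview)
Your proposal is correct and follows essentially the same approach as the paper: isolate Step~\ref{step:subq:prune2} as the only place where collinearities matter, relax the binary-search target to the strict inequality $N_S(\triangle xv_1y)>n_1'$ with $v_1$ extremal toward $B$, invoke the $O(\log N)$ iteration bound, and then re-run the general-position analysis and the $\log N$ dismissal verbatim. If anything, you are more careful than the paper, which asserts the modification without explicitly verifying that $u$ still lies in $P_S(\triangle xv_1v_2)$ or that $N_S(\triangle xv_1v_2)=O(n_2')$; your flagged ``main obstacle'' is exactly the correctness check the paper leaves implicit.
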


\section{Generalized Version}\label{sec:genProb}
A generalized version of the Point Set Embeddability problem is also of interest in the graph drawing research community, where $S$ has more points than the number of vertices of the input graph $G$. More formally, the generalized point set embeddability problem is defined as follows.
\begin{prb} [Generalized Point-Set Embeddability] \label{prb:ptemb}
 Let $G$ be a plane graph of $n$ vertices and $S$ is a set of $k$ points on plane such that $k > n$. The generalized point-set embeddability problem wants to find a straight-line drawing of $G$ such that vertices of $G$ are mapped to some $n$ points of $S$.
\end{prb}
In this section, we extend our study to solve the Generalized Point-set Embeddability problem for plane 3-trees. This version of the problem was also handled in~\cite{DBLP:conf/gd/NishatMR10} for plane 3-trees and they presented an algorithm for the problem that runs in $O(nk^8)$ time. Our target again is to improve upon their algorithm.
We show how to improve the result to $O(n k^4)$.

We will use dynamic programming (DP) for this purpose. For DP, we define our (sub)problem to be $Embed(r',a',b',c')$ where $r'$ is the root of the (sub)tree $T'$ and $a'$,$b'$ and $c'$ are points in $S$. Now, $Embed(r',a',b',c')$ returns true if and only if it is possible to embed the corresponding subgraph $G'$ of the subtree $T'$ rooted at $r'$ such that its three outer vertices are mapped to the points $a'$, $b'$  and  $c'$. Now we can start building the DP matrix by computing $Embed(r',a',b',c')$ for the smaller subtrees to see whether the corresponding subgraphs can be embedded for a combination of 3 points of $S$ as outer vertices of the corresponding subgraphs. In the end, the goal is to check whether $Embed(r,a,b,c)$ returns true for any particular points $a,b,c \in S$, where $r$ is the root of the representative tree $T$ of the input plane 3-tree $G$. Clearly, if there exists $a,b,c\in S$ such that $Embed(r,a,b,c)$ is true, then $G$ is embeddable in $S$. We now have the following theorem.
\begin{theorem}
The Generalized Point-Set Embeddability problem can be solved in $O(n\times k^4)$ time.
\end{theorem}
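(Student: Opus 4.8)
The plan is to turn the sketch around $Embed(r',a',b',c')$ into a precise dynamic program, give a recurrence for it that is filled in bottom-up over the representative tree $T$, and then count the work. First I would construct $T$ in $O(n)$ time using the algorithm cited in Section~\ref{sec:prel}. For the base case, when $r'$ is a leaf of $T$ the corresponding graph $G'$ is a single triangle, so $Embed(r',a',b',c')$ is true exactly when $a',b',c'$ are distinct points of $S$ forming a non-degenerate triangle (with the orientation fixed by the ordered tree $T$); this is tested in $O(1)$ per triple with orientation predicates. For an internal node $r'$ with children $r_1,r_2,r_3$ in the order fixed by $T$, I would use, mirroring Step~\ref{step:start1} of the NMR algorithm, the recurrence
\[
Embed(r',a',b',c') \;=\; \bigvee_{u\in S}\ \Big(\, u\in P_S(\triangle a'b'c')\ \wedge\ Embed(r_1,a',u,b')\ \wedge\ Embed(r_2,b',u,c')\ \wedge\ Embed(r_3,c',u,a')\,\Big),
\]
where $u$ is the point to which the representative vertex $p'$ of $G'$ is mapped; its three incident edges $p'a',p'b',p'c'$ cut $\triangle a'b'c'$ into the three sub-triangles in which the subgraphs $G(\triangle a'p'b')$, $G(\triangle b'p'c')$, $G(\triangle c'p'a')$ must be drawn.

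The substance of the proof is verifying this recurrence. For the forward direction, any point-set embedding of $G'$ with outer vertices at $a',b',c'$ has the geometric triangle $\triangle a'b'c'$ as its outer face, maps the internal vertex $p'$ to some point $u\in S$ strictly inside it, and has the straight segments $p'a',p'b',p'c'$ split the interior into three triangles; the restriction of the drawing to each subgraph witnesses the corresponding $Embed(r_i,\cdot,\cdot,\cdot)$, all with the same $u$. For the reverse direction, I would glue the three witnessing sub-drawings along the shared segments: since the three sub-triangles tile $\triangle a'b'c'$ and, under general position, their pairwise intersections contain no point of $S$ besides the shared corners, the union is a crossing-free straight-line drawing that assigns distinct points to distinct vertices and realizes the plane structure of $G'$. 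A simplification relative to the $|S|=n$ case is that no explicit point-count condition on the sub-triangles is needed: the number of vertices of each $G_i$ is already encoded by the size of the subtree $T_i$, so a short induction shows that any solution of the recurrence uses exactly $n$ distinct points of $S$ for the whole graph. I expect the main obstacle to be the bookkeeping in the reverse direction when $S$ is not in general position, where points of $S$ lying on the cutting segments $a'u,b'u,c'u$ must be shown not to create a conflict between two sub-drawings; this can be handled with the same care as in Section~\ref{sec:subq}, or the general-position assumption can simply be maintained as in that section.

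Finally I would bound the running time. The table has one Boolean entry per internal node of $T$ and per ordered triple of points of $S$, i.e. $O(n)\cdot O(k^3)$ entries, and the leaves add another $O(n k^3)$ entries filled trivially. Evaluating the recurrence at an internal entry scans every $u\in S$ and spends $O(1)$ per $u$ — one point-in-triangle test and three look-ups into the already-computed children tables — so each internal entry costs $O(k)$. Processing the nodes of $T$ in post-order, the totals are $O(n k^3)$ for the leaves plus $O(n)\cdot O(k^3)\cdot O(k)=O(n k^4)$ for the internal nodes, which dominates the $O(n)$ tree construction and the concluding $O(k^3)$ scan over $Embed(r,\cdot,\cdot,\cdot)$ at the root $r$. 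Caching, for every true entry, one witnessing point $u$ lets us recover an explicit embedding in $O(n)$ extra time by a top-down walk of $T$. This yields the claimed $O(n k^4)$ bound.
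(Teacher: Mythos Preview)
Your proof is correct and follows essentially the same dynamic program as the paper: the same table $Embed(r',a',b',c')$, the same recurrence over the children of $r'$, and the same $O(nk^3)$ entries each costing $O(k)$.

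The one difference worth noting is in how you enumerate candidate points $u$. The paper preprocesses $S$ for triangular range search and, for each entry, retrieves $P_S(\triangle a'b'c')$ at cost $O(g(|S|)+N_S(\triangle a'b'c'))$ before testing each returned point; since $N_S(\triangle a'b'c')$ can be $\Theta(k)$ in the worst case, this still gives $O(k)$ per entry and hence $O(nk^4)$ overall. You instead simply scan all of $S$ and do an $O(1)$ point-in-triangle test per candidate, which is more elementary and achieves the same bound without any range-search preprocessing. So for this theorem the range-search machinery that drives the paper's main result is actually unnecessary, and your argument is slightly cleaner. You also spell out the correctness of the recurrence (both directions) and the orientation/degeneracy checks at the leaves more carefully than the paper does.
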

\begin{proof}
If $r'$ is a leaf, then $Embed(r',a',b',c')$ is trivially true for any $a',b',c'\in S$. Now consider the calculation of $Embed(r',a',b',c')$ when $r'$ is not a leaf. Then, assume that the children of $r'$ are $r_1'$, $r_2'$ and $r_3'$ in that order. Then $Embed(r',a',b',c')$ is true if and only if we can find a point $u \in P_S(\triangle a'b'c')$ such that $Embed(r_1',a',b',u)$, $Embed(r_2',u,b',c')$ and $Embed(r_3',a',u,c')$ are true; otherwise $Embed(r',a',b',c')$ will be false. Clearly, the time required to calculate $Embed(r',a',b',c')$ is $O(g(|S|) + N_S(\triangle a'b'c'))$. Therefore, in the worst case the time required to calculate an entry of the DP matrix is $O(g(|S|) + k)$. Now, it is easy to realize that there are in total $nk^3$ entries in the DP matrix. Note that to be able to compute $P_S(\triangle a'b'c')$ we need to spend $O(f(|S|))$ time for a one-time preprocessing. Hence, the total running time is $O(f(|S|) + nk^3 \times (g(|S|) + k)) = O(f(|S|) + nk^3 g(|S|) + nk^4)$. Using the $\langle O(m^{1+\epsilon}),O(n^{1+\epsilon}/m^{1/2})\rangle$ result of Chazelle et al.~\cite{ALGOR::ChazelleSW1992}, the running time becomes $O(O(m^{1+\epsilon}) + nk^3 \times n^{1+\epsilon}/m^{1/2} + nk^4)$. Since, $n<m<n^2$, this running time is $O(nk^4)$ in the worst case.
\qed \end{proof}
\section{Conclusion}\label{sec:con}
In this paper, we have followed up the work of \cite{DBLP:conf/gd/NishatMR10} and presented an algorithm to solve the point-set embeddability problem in $O(n^{4/3+\epsilon} \log n)$ time.
This improves the recent $O(n^2 \log n)$ time result of~\cite{DBLP:conf/gd/NishatMR10}. Whether this algorithm can be improved further is an interesting open problem. In fact, 
an $o(n^{4/3})$ algorithm could be an interesting avenue to explore, which however, does not seem to be likely with our current technique. Since there are $\Omega(n)$ nodes in the tree, any solution that uses triangular range search to check validity at least once for each node in the tree would require $\Omega(n)$ calls to triangular range query. Lower bound for triangular range search is shown to be $\langle \Omega(m), \Omega(n/m^{1/2}) \rangle$~\cite{journals/dcg/Chazelle97}, which suggests an $\Omega(n^{4/3})$ lower bound for our algorithm using triangular range search.

We have also studied a generalized version of the point-set embeddability problem where the point set $S$ has more than $n$ points. For this version of the problem we have presented an algorithm that runs in $O(nk^4)$ time, where $k = |S|$. Nishat et al.~\cite{DBLP:conf/gd/NishatMR10} also handled this version of the problem and presented an $O(nk^8)$ time algorithm. It would be interesting to see whether further improvements in this case are possible. Also, future research may be directed towards solving these problems for various other classes of graphs.

\bibliographystyle{abbrv}
\bibliography{planetree}

\begin{thebibliography}{10}

\bibitem{DBLP:conf/wads/BadentGL07}
M.~Badent, E.~D. Giacomo, and G.~Liotta.
\newblock Drawing colored graphs on colored points.
\newblock In F.~K. H.~A. Dehne, J.-R. Sack, and N.~Zeh, editors, {\em WADS},
  volume 4619 of {\em Lecture Notes in Computer Science}, pages 102--113.
  Springer, 2007.

\bibitem{BiedlV10}
T.~Biedl and L.~R. Velázquez.
\newblock Drawing planar 3-trees with given face-areas.
\newblock In D.~Eppstein and E.~Gansner, editors, {\em Graph Drawing}, volume
  5849 of {\em Lecture Notes in Computer Science}, pages 316--322. Springer
  Berlin / Heidelberg, 2010.

\bibitem{conf/gd/BiedlV09}
T.~C. Biedl and L.~E.~R. Vel{\'a}zquez.
\newblock Drawing planar 3-trees with given face-areas.
\newblock In D.~Eppstein and E.~R. Gansner, editors, {\em Graph Drawing},
  volume 5849 of {\em Lecture Notes in Computer Science}, pages 316--322.
  Springer, 2009.

\bibitem{DBLP:journals/comgeo/Bose02}
P.~Bose.
\newblock On embedding an outer-planar graph in a point set.
\newblock {\em Comput. Geom.}, 23(3):303--312, 2002.

\bibitem{journals/jgaa/Cabello06}
S.~Cabello.
\newblock Planar embeddability of the vertices of a graph using a fixed point
  set is {NP}-hard.
\newblock {\em J. Graph Algorithms Appl}, 10(2):353--363, 2006.

\bibitem{journals/dcg/Chazelle97}
B.~Chazelle.
\newblock Lower bounds for off-line range searching.
\newblock {\em Discrete \& Computational Geometry}, 17(1):53--65, 1997.

\bibitem{ALGOR::ChazelleSW1992}
B.~Chazelle, M.~Sharir, and E.~Welzl.
\newblock Quasi-optimal upper bounds for simplex range searching and new zone
  theorems.
\newblock {\em Algorithmica}, 8:407--429, 1992.

\bibitem{DBLP:journals/jgaa/GiacomoDLMTW08}
E.~D. Giacomo, W.~Didimo, G.~Liotta, H.~Meijer, F.~Trotta, and S.~K. Wismath.
\newblock k-colored point-set embeddability of outerplanar graphs.
\newblock {\em J. Graph Algorithms Appl.}, 12(1):29--49, 2008.

\bibitem{DBLP:journals/ijfcs/GiacomoLT06}
E.~D. Giacomo, G.~Liotta, and F.~Trotta.
\newblock On embedding a graph on two sets of points.
\newblock {\em Int. J. Found. Comput. Sci.}, 17(5):1071--1094, 2006.

\bibitem{DBLP:journals/dcg/IkebePTT94}
Y.~Ikebe, M.~A. Perles, A.~Tamura, and S.~Tokunaga.
\newblock The rooted tree embedding problem into points in the plane.
\newblock {\em Discrete {\&} Computational Geometry}, 11:51--63, 1994.

\bibitem{journals/dcg/IkebePTT94}
Y.~Ikebe, M.~A. Perles, A.~Tamura, and S.~Tokunaga.
\newblock The rooted tree embedding problem into points in the plane.
\newblock {\em Discrete \& Computational Geometry}, 11:51--63, 1994.

\bibitem{DBLP:journals/dam/KanekoK00}
A.~Kaneko and M.~Kano.
\newblock Straight line embeddings of rooted star forests in the plane.
\newblock {\em Discrete Applied Mathematics}, 101(1-3):167--175, 2000.

\bibitem{DBLP:journals/ijcga/KanekoK05}
A.~Kaneko and M.~Kano.
\newblock Semi-balanced partitions of two sets of points and embeddings of
  rooted forests.
\newblock {\em Int. J. Comput. Geometry Appl.}, 15(3):229--238, 2005.

\bibitem{MondalNRA10}
D.~Mondal, R.~I. Nishat, M.~Rahman, and J.~Alam.
\newblock Minimum-area drawings of plane 3-trees.
\newblock In {\em Proceedings of the 22nd Canadian Conference on Computational
  Geometry (CCCG2010)}, pages 191--194, 2010.

\bibitem{DBLP:conf/gd/NishatMR10}
R.~I. Nishat, D.~Mondal, and M.~S. Rahman.
\newblock Point-set embeddings of plane 3-trees.
\newblock In {\em Graph Drawing}, page To Appear, 2010.

\bibitem{nishizeki2004planar}
T.~Nishizeki and M.~S. Rahman.
\newblock {\em Planar graph drawing}.
\newblock World Scientific Pub Co. Inc., 2004.

\bibitem{DBLP:journals/gc/PachW01}
J.~Pach and R.~Wenger.
\newblock Embedding planar graphs at fixed vertex locations.
\newblock {\em Graphs and Combinatorics}, 17(4):717--728, 2001.

\bibitem{Paterson1986441}
M.~S. Paterson and F.~F. Yao.
\newblock Point retrieval for polygons.
\newblock {\em Journal of Algorithms}, 7(3):441 -- 447, 1986.

\end{thebibliography}
\end{document}